\providecommand{\abs}[1]{\left\lvert#1\right\rvert}
\newcommand{\tr}{\textup{\textrm{tr}}}
\newcommand{\set}[1]{\mathcal{#1}}
\newcommand{\code}{\texttt}
\newcommand{\bra}[1]{{\left\langle{#1}\right\vert}}
\newcommand{\ket}[1]{{\left\vert{#1}\right\rangle}}
\newcommand{\brakket}[3]{\left\langle#1\left|\vphantom{#1}#2\vphantom{#3}\right|#3\right\rangle}
\newcommand{\bke}[1]{\textbf{#1}}
\renewcommand{\bke}[1]{}
\newcommand{\proj}[1]{{\mathbb{#1}}}
\newcommand{\opfunc}[1]{{\mathrm{#1}}}
\newcommand{\func}[1]{{\mathrm{#1}}}
\newcommand{\supop}[1]{{\mathrm{#1}}}
\newcommand{\prop}[1]{{\mathscr{#1}}}
\newcommand{\bigominus}{\mathop{\raisebox{-.1em}{\Large\boldmath$\ominus$}}}
\newtheorem{lem}{Lemma}
\newtheorem{defn}{Definition}
\newtheorem{thm}{Theorem}
\newcommand{\notG}{{\mathrel{\text{$\set{G}$\llap{$\backslash$}}}}}
\definecolor{gray}{gray}{0.5}
\begin{document}

\title{Simulating Concordant Computations}

\author{Bryan Eastin}
\email{beastin@nist.gov}

\affiliation{National Institute of Standards and Technology,
Boulder, CO 80305}

\begin{abstract}
A quantum state is called concordant if it has zero quantum discord with respect to any part.  By extension, a concordant computation is one such that the state of the computer, at each time step, is concordant.
In this paper, I describe a classical algorithm that, given a product state as input, permits the efficient simulation of any concordant quantum computation having a conventional form and composed of gates acting on two or fewer qubits.
This shows that such a quantum computation must generate quantum discord if it is to efficiently solve a problem that requires super-polynomial time classically.
While I employ the restriction to two-qubit gates sparingly, a crucial component of the simulation algorithm appears not to be extensible to gates acting on higher-dimensional systems.
\end{abstract}

\maketitle

The search for the origin of the computational power of quantum mechanics has proven to be a recurring theme in quantum information theory.  Primarily, this search has focused on identifying the feature of quantum mechanics that permits the efficient\footnote{The definition of ``efficient'' is taken from classical computer science, where it refers to any computation that requires an amount of resources (particularly time steps) scaling at most polynomially with the problem size.} solution of certain classically intractable problems.  In addition to being useful, computational speedups of this magnitude are intriguing since, classically, no such improvement is to be found over rather basic models of computation, e.g., the Turing machine.

Among the proposed sources of this quantum advantage, the most widely studied is a kind of non-local correlation known as entanglement~\cite{Horodecki09}.  The state of a composite system is entangled if it cannot be described in terms of a, possibly uncertain, local assignment of states to individual subsystems.  Classically, non-trivial correlations indicate imperfect information about the state of the system, but entanglement is possible for quantum states of maximal knowledge, or pure states.  At the extreme, an entangled state of a composite system may be pure while the marginal state of the component subsystems is maximally impure, or maximally mixed.  In other words, one may know everything possible about the state of a composite quantum system without knowing anything about the state of the component subsystems.  As a distinctly non-classical property and a necessary resource for protocols such as teleportation and quantum error correction, entanglement is a natural suspect when investigating the power of quantum computing.

There are two kinds of evidence in favor of entanglement as the crucial resource for achieving speedups that enable the efficient solution of a classically intractable problem, a variety of speedup henceforth labeled Promethean.
First, there are proofs that pure-state quantum computations generating only limited amounts of entanglement can be efficiently simulated classically and are therefore incapable of solving any problem that cannot be solved in polynomial time by a classical computer.  An early result of this sort was shown by Jozsa and Linden~\cite{Jozsa03}, who described a method for efficiently simulating any quantum computation whose correlations are approximately confined to regions of bounded size.  Shortly thereafter, Vidal proposed an efficient simulation algorithm for quantum computations whose maximum Schmidt rank for any bipartition of the computer scales at most as a polynomial~\cite{Vidal03}.  These methods of simulation can each be applied to quantum computations with either mixed or pure states, but in the former case classical correlations, in addition to entanglement, are restricted.  The second kind of evidence for the importance of entanglement is its apparent generation by all implementations of Shor's quantum factoring algorithm.  In particular, a typical implementation of Shor's algorithm has been shown to generate entanglement that precludes its simulation by either Jozsa and Linden's or Vidal's method~\cite{Jozsa03,Orus04}.  To summarize, entanglement is necessary for obtaining Promethean speedups with pure-state quantum computing, and there are indications that it may be required for Shor's algorithm.

Regarding mixed states, further, and contrary, evidence comes from the DQC1 model of quantum computation~\cite{Knill98b}, where all but one of the qubits in the computer is initially prepared in the maximally mixed state.
DQC1 is believed to be strictly less powerful than pure-state quantum computing~\cite{Knill98b,Ambainis06}, but it nonetheless seems to be capable of providing Promethean speedups in, for example, trace estimation.  Datta, Flammia, and Caves have shown numerically that trace estimation is possible even with a vanishing amount of entanglement (as measured by the negativity of bipartite splittings)~\cite{Datta05}.  Nevertheless, Datta and Vidal have shown that the Schmidt rank grows exponentially for certain bipartitions of a quantum computer performing trace estimation~\cite{Datta07}, thereby demonstrating the existence of correlations, though not necessarily entanglement, sufficient to thwart Vidal's simulation method.  Based on these results, it seems probable that Promethean speedups are possible even in the absence of entanglement.

But if entanglement is not the source of Promethean speedups in DQC1 then we are left to ask what is.  Among the proposed alternatives is a measure of non-classical correlation known as quantum discord~\cite{Zurek00}.  Datta, Shaji, and Caves have shown that discord is indeed present in the trace-estimation algorithm~\cite{Datta08}, but it has never been proven to be necessary.
The work presented in this paper was motivated by the desire to show that discord is necessary for Promethean speedups in mixed-state quantum computations.  Since, for pure states, discord reduces to a measure of entanglement, this would amount to an extension of the result (described above) about the utility of entanglement in pure-state quantum computing.   To this end, I considered the difficulty of simulating concordant computations, i.e., those that generate no quantum discord, as suggested by Ref.~\cite{Lanyon08}.

Here, I describe an algorithm for efficiently simulating, using a classical computer, any computation that does not generate discord and consists of a sequence of one- and two-qubit unitary gates followed by single-qubit measurements.  Section~\ref{sec:notation} briefly introduces some notation and Sec.~\ref{sec:concordance} covers discord, concordance, and concordant computations and proves a few results that are employed later.  My simulation algorithm is described for quantum computations in a conventional form in Sec.~\ref{sec:simulatingConventionalConcordantComputations} and extensions to non-conventional forms are discussed in Sec.~\ref{sec:extensions}.  The conclusion contains a discussion of open problems.

\section{Notation\label{sec:notation}}

Unitary operators, projectors, and sets are denoted by capital roman letters in math-italic, black-board, and calligraphic font, respectively, e.g., $U$, $\proj{P}$, and $\set{A}$.  For more generic functions on quantum states I use capital Roman letters in math font.
Throughout the paper, quantum operators and states are given subscripts (which may be sets) to denote the subsystems they act upon and/or to index the component corresponding to that subsystem; all other identifying indices and labels are represented as superscripts.
Thus, the state of a composite system can be expressed as $\rho_{\set{A}\set{B}}$, where $\set{A}$ and $\set{B}$ are disjoint sets indexing the subsystems, and the marginal density operator of part $\set{B}$ of $\rho_{\set{A}\set{B}}$ is written as $\rho_\set{B} = \tr_\set{A}(\rho_{\set{A}\set{B}})$, where $\tr_\set{A}$ is the trace over part $\set{A}$.  Contrary to this example, I frequently omit the subscript when it would specify the entire system.  Whenever indicated, the time step is labeled by a superscript.
The symbols $\cup$, $\cap$, $\setminus$, and $\ominus$ are used to denote the set-theoretic operations of union, intersection, difference, and symmetric difference, and I denote the complement of a set $\set{G}$ by $\notG$.
Vectors over finite fields are denoted by placing a right arrow over a symbol, and the subscripting of such vectors by a set represents the restriction of the vector to the components indicated by the set, e.g., $\vec{i}_\set{G} = \{i_k: k\in\set{G}\}$.
The support of an operator is taken to mean the set of subsystems upon which the operator acts nontrivially.

\section{Concordance\label{sec:concordance}}

The notion of a classical state frequently carries with it the idea of a preferred basis.  In a Stern-Gerlach experiment, for example, the resulting superposition of different spins and locations is rarely considered as simply representing a novel basis for classical particles.
From this perspective, a classical state is one selected from a preferred basis of orthogonal states, where the basis for a composite system arises from the tensor product of the preferred bases for the component subsystems.  When the state of a system is uncertain, we describe it using a probability distribution over known, or pure, classical states.

A concordant state differs from this definition of classicality only in that no preferred basis is specified; any set of orthogonal bases for the subsystems may be used to determine the pure states allowed to the composite system.  I take a concordant computation, in turn, to be one in which the state of the computer after any step is concordant.  This usage of ``concordant'' seems to have been coined by Andrew White, but it has not previously appeared in publication.  In the following subsections, I explicitly define concordant states and computations as well as reviewing or proving some results used later in the paper.

\subsection{Quantum discord\label{subsec:quantumDiscord}}

Quantum discord is a measure of non-classical correlations introduced by Zurek~\cite{Zurek00}.  Intuitively, it quantifies the amount of non-local disturbance caused by measuring part of a quantum state.
For a quantum state $\rho_{\set{A}\set{B}}$, the quantum discord with respect to part $\set{B}$ can be defined as
\begin{align*}
  \begin{split}
    \opfunc{D}_\set{B}(\rho_{\set{A}\set{B}}) = \min_{\{\proj{P}^i_\set{B}\}} &\left[\opfunc{H}\!\!\left(\rho_{\set{A}\set{B}}^{\{\proj{P}^i_\set{B}\}}\right)-\opfunc{H}\!\!\left(\rho_{\set{B}}^{\{\proj{P}^i_\set{B}\}}\right)\right]\\
    &- \left[\opfunc{H}(\rho_{\set{A}\set{B}})-\opfunc{H}(\rho_\set{B})\right]
  \end{split}
\end{align*}
where $\{\proj{P}^i_\set{B}\}$ is a complete set of orthogonal one-dimensional projectors (CSOOP) on part $\set{B}$,
\begin{align*}
  \rho_{\set{A}\set{B}}^{\{\proj{P}^i_\set{B}\}} &= \sum_i \proj{P}^i_\set{B}\rho_{\set{A}\set{B}}\proj{P}^i_\set{B}\;,
\end{align*}
and $\opfunc{H}(\rho) = -\tr(\rho\log_2\rho)$ is the Von Neumann entropy, the quantum analog of Shannon entropy.  This definition is somewhat less general than that of Zurek, who did not insist on the minimization, instead making quantum discord a function of the choice of projectors.

Ollivier and Zurek~\cite{Ollivier01} showed that $\opfunc{D}_\set{B}(\rho_{\set{A}\set{B}})=0$ if and only if
\begin{align}
  \rho_{\set{A}\set{B}} = \sum_i \proj{P}^i_\set{B} \rho_{\set{A}\set{B}} \proj{P}^i_\set{B} \label{eq:commutesWithProjector}
\end{align}
for some CSOOP $\{\proj{P}^i_\set{B}\}$ on part $\set{B}$, or equivalently,
\begin{align}
  \rho_{\set{A}\set{B}} = \sum_i \tr_\set{B}(\rho_{\set{A}\set{B}}\proj{P}^i_\set{B})\otimes \proj{P}^i_\set{B} = \sum_i p_i \rho_{\set{A}}^{\proj{P}^i_\set{B}}\otimes \proj{P}^i_\set{B}
\end{align}
where $p_i = \tr(\rho_{\set{A}\set{B}}\proj{P}^i_\set{B})$, $\rho_{\set{A}}^{\proj{P}^i_\set{B}} = \tr_\set{B}(\rho_{\set{A}\set{B}}^{\proj{P}^i_\set{B}})$, and
\begin{align}
  \rho_{\set{A}\set{B}}^{\proj{P}^i_\set{B}} &= \proj{P}^i_\set{B}\rho_{\set{A}\set{B}}\proj{P}^i_\set{B}/\tr(\rho_{\set{A}\set{B}}\proj{P}^i_\set{B})\;.
\end{align}
Lemma~\ref{lem:projectorsUniqueUpToPartialDegeneracy} shows that the set of projectors satisfying Eq.~\ref{eq:commutesWithProjector} is unique up to degeneracy in part $\set{B}$ of $\rho_{\set{A}\set{B}}$.  The notion of degeneracy on a part of a larger state is clarified by Definition~\ref{def:partDegeneracy}.

\begin{defn}
Two states are degenerate on part $\set{B}$ of $\rho_{\set{A}\set{B}}$ if the corresponding projectors $\proj{P}_\set{B}$ and $\proj{Q}_\set{B}$ satisfy $\tr_\set{B}(\rho_{\set{A}\set{B}}\proj{P}_\set{B}) = \tr_\set{B}(\rho_{\set{A}\set{B}}\proj{Q}_\set{B})$.
\label{def:partDegeneracy}
\end{defn}

\begin{lem}
Given two CSOOPs on $\set{B}$, $\{\proj{P}^i_\set{B}\}$ and $\{\proj{Q}^j_\set{B}\}$, and a state $\rho_{\set{A}\set{B}} = \sum_i \proj{P}^i_\set{B} \rho_{\set{A}\set{B}} \proj{P}^i_\set{B}$, $\rho_{\set{A}\set{B}} = \sum_j \proj{Q}^j_\set{B} \rho_{\set{A}\set{B}} \proj{Q}^j_\set{B}$ if and only if $\tr_\set{B}(\rho_{\set{A}\set{B}}\proj{P}^i_\set{B}) = \tr_\set{B}(\rho_{\set{A}\set{B}}\proj{Q}^j_\set{B})$ for all $\proj{P}^i_\set{B}\proj{Q}^j_\set{B}\neq 0$.
\label{lem:projectorsUniqueUpToPartialDegeneracy}
\end{lem}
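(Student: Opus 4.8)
The plan is to run the hypothesis through the two equivalent forms of the zero-discord condition recorded above, reducing the lemma to an elementary identity among operators on part $\set{A}$. Write $\proj{P}^i_\set{B} = \ket{p_i}\bra{p_i}$ and $\proj{Q}^j_\set{B} = \ket{q_j}\bra{q_j}$ for unit vectors forming orthonormal bases of part $\set{B}$, and abbreviate $\sigma^i_\set{A} := \tr_\set{B}(\rho_{\set{A}\set{B}}\proj{P}^i_\set{B})$ and $\tau^j_\set{A} := \tr_\set{B}(\rho_{\set{A}\set{B}}\proj{Q}^j_\set{B})$. Since $\proj{P}^i_\set{B}\proj{Q}^j_\set{B} = \braket{p_i}{q_j}\ket{p_i}\bra{q_j}$, the condition $\proj{P}^i_\set{B}\proj{Q}^j_\set{B}\neq 0$ is just $\braket{p_i}{q_j}\neq 0$. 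By the hypothesis $\rho_{\set{A}\set{B}} = \sum_i\proj{P}^i_\set{B}\rho_{\set{A}\set{B}}\proj{P}^i_\set{B}$ and the equivalent form of Eq.~\ref{eq:commutesWithProjector}, $\rho_{\set{A}\set{B}} = \sum_i\sigma^i_\set{A}\otimes\proj{P}^i_\set{B}$; similarly $\rho_{\set{A}\set{B}} = \sum_j\proj{Q}^j_\set{B}\rho_{\set{A}\set{B}}\proj{Q}^j_\set{B}$ holds if and only if $\rho_{\set{A}\set{B}} = \sum_j\tau^j_\set{A}\otimes\proj{Q}^j_\set{B}$.

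For the forward direction, assume $\rho_{\set{A}\set{B}} = \sum_j\proj{Q}^j_\set{B}\rho_{\set{A}\set{B}}\proj{Q}^j_\set{B}$, so that both expansions of $\rho_{\set{A}\set{B}}$ above are available, and compute $\proj{P}^i_\set{B}\rho_{\set{A}\set{B}}\proj{Q}^j_\set{B}$ in two ways. Inserting the $\proj{P}$-expansion and using orthogonality of the $\proj{P}^i_\set{B}$ gives $\sigma^i_\set{A}\otimes\proj{P}^i_\set{B}\proj{Q}^j_\set{B}$; inserting the $\proj{Q}$-expansion and using orthogonality of the $\proj{Q}^j_\set{B}$ gives $\tau^j_\set{A}\otimes\proj{P}^i_\set{B}\proj{Q}^j_\set{B}$. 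Hence $(\sigma^i_\set{A}-\tau^j_\set{A})\otimes\proj{P}^i_\set{B}\proj{Q}^j_\set{B} = 0$, and whenever $\proj{P}^i_\set{B}\proj{Q}^j_\set{B}\neq 0$ this forces $\sigma^i_\set{A} = \tau^j_\set{A}$, i.e.\ $\tr_\set{B}(\rho_{\set{A}\set{B}}\proj{P}^i_\set{B}) = \tr_\set{B}(\rho_{\set{A}\set{B}}\proj{Q}^j_\set{B})$.

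For the converse, assume $\sigma^i_\set{A} = \tau^j_\set{A}$ whenever $\braket{p_i}{q_j}\neq 0$ and show that the off-diagonal $\proj{Q}$-blocks of $\rho_{\set{A}\set{B}}$ vanish. Inserting the $\proj{P}$-expansion gives
\begin{align*}
  \proj{Q}^j_\set{B}\rho_{\set{A}\set{B}}\proj{Q}^{j'}_\set{B} = \left(\sum_i\braket{q_j}{p_i}\braket{p_i}{q_{j'}}\,\sigma^i_\set{A}\right)\otimes\ket{q_j}\bra{q_{j'}}.
\end{align*}
For $j\neq j'$, a term in this sum is nonzero only if $\braket{p_i}{q_j}\neq 0$ and $\braket{p_i}{q_{j'}}\neq 0$, in which case the hypothesis gives $\sigma^i_\set{A} = \tau^j_\set{A}$, a value independent of $i$ (and if no such $i$ occurs the block already vanishes). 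Factoring out $\tau^j_\set{A}$, re-extending the sum over all $i$ (the added terms are zero), and using completeness and orthonormality of $\{\ket{p_i}\}$ and $\{\ket{q_j}\}$ yields $\sum_i\braket{q_j}{p_i}\braket{p_i}{q_{j'}} = \braket{q_j}{q_{j'}} = 0$. Therefore $\rho_{\set{A}\set{B}} = \sum_{j,j'}\proj{Q}^j_\set{B}\rho_{\set{A}\set{B}}\proj{Q}^{j'}_\set{B} = \sum_j\proj{Q}^j_\set{B}\rho_{\set{A}\set{B}}\proj{Q}^j_\set{B}$.

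No step is genuinely difficult; the one point requiring care is in the converse, where one must observe that all surviving terms of a given off-diagonal block share the same part-$\set{A}$ operator $\tau^j_\set{A}$, so that it factors out and completeness of $\{\ket{p_i}\}$ collapses the remaining scalar sum — together with the trivial bookkeeping of the empty case in which no term survives.
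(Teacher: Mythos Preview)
Your proof is correct and follows essentially the same approach as the paper's. In both directions the core move is identical: use the $\proj{P}$-expansion $\rho_{\set{A}\set{B}}=\sum_i\sigma^i_\set{A}\otimes\proj{P}^i_\set{B}$ and exploit the hypothesis to swap $\sigma^i_\set{A}$ for $\tau^j_\set{A}$ on any term where $\proj{P}^i_\set{B}\proj{Q}^j_\set{B}\neq 0$. The only cosmetic difference is in the converse: the paper inserts a resolution of the identity $\sum_j\proj{Q}^j_\set{B}$ to the right of $\proj{P}^i_\set{B}$ and swaps $\sigma^i_\set{A}\to\tau^j_\set{A}$ directly, obtaining $\rho_{\set{A}\set{B}}=\sum_j\tau^j_\set{A}\otimes\proj{Q}^j_\set{B}$ in one line, whereas you compute the off-diagonal $\proj{Q}$-blocks and show they vanish; both routes hinge on the same substitution and the same completeness relation.
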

\begin{proof}
The forward implication follows from
\begin{align*}
  \begin{split}
    \tr_\set{B}(\rho_{\set{A}\set{B}}\proj{P}^i_\set{B}\proj{Q}^j_\set{B}) &=
    \sum_h \tr_\set{B}(\proj{P}^h_\set{B} \rho_{\set{A}\set{B}} \proj{P}^h_\set{B}\proj{P}^i_\set{B}\proj{Q}^j_\set{B}) \\
    &= \tr_\set{B}(\rho_{\set{A}\set{B}}\proj{P}^i_\set{B}\proj{Q}^j_\set{B}\proj{P}^i_\set{B}) = e_{ij} \tr_\set{B}(\rho_{\set{A}\set{B}}\proj{P}^i_\set{B})\\
    &= \sum_h \tr_\set{B}(\proj{Q}^h_\set{B} \rho_{\set{A}\set{B}} \proj{Q}^h_\set{B}\proj{P}^i_\set{B}\proj{Q}^j_\set{B}) \\
    &= \tr_\set{B}(\rho_{\set{A}\set{B}}\proj{Q}^j_\set{B}\proj{P}^i_\set{B}\proj{Q}^j_\set{B}) = e_{ij} \tr_\set{B}(\rho_{\set{A}\set{B}}\proj{Q}^j_\set{B})
  \end{split}
\end{align*}
where $e_{ij}=\tr_\set{B}(\proj{P}^i_\set{B}\proj{Q}^j_\set{B})$.
The reverse implication follows from
\begin{align*}
  \begin{split}
    \rho_{\set{A}\set{B}} &= \sum_i \tr_\set{B}(\rho_{\set{A}\set{B}}\proj{P}^i_\set{B})\otimes \proj{P}^i_\set{B}\\
      &= \sum_{i,j} \tr_\set{B}(\rho_{\set{A}\set{B}}\proj{P}^i_\set{B})\otimes(\proj{P}^i_\set{B} \proj{Q}^j_\set{B})\\
      &= \sum_{i,j} \tr_\set{B}(\rho_{\set{A}\set{B}}\proj{Q}^j_\set{B})\otimes(\proj{P}^i_\set{B} \proj{Q}^j_\set{B}) \\
      &= \sum_{j} \tr_\set{B}(\rho_{\set{A}\set{B}}\proj{Q}^j_\set{B})\otimes \proj{Q}^j_\set{B}\;.
  \end{split}
\end{align*}
\end{proof}

If the quantum discord of $\rho_{\set{A}\set{B}}$ is zero with respect to both $\set{A}$ and $\set{B}$ then, by two applications of Eq.~\ref{eq:commutesWithProjector},
\begin{align}
  \rho_{\set{A}\set{B}} = \sum_{i,j} p_{i j} \proj{P}^i_\set{A}\otimes \proj{P}^j_\set{B} \label{eq:commutesWithProjectors}
\end{align}
for some CSOOPs $\{\proj{P}^i_\set{A}\}$ and $\{\proj{P}^j_\set{B}\}$.  For fixed $\{\proj{P}^i_\set{A}\}$, Lemma~\ref{lem:projectorsUniqueUpToPartialDegeneracy} shows that the set of projectors $\{\proj{P}^j_\set{B}\}$ satisfying Eq.~\ref{eq:commutesWithProjectors} is unique up to the degeneracy common to all $\rho_\set{B}^{\proj{P}^i_\set{A}}$, that is, up to degeneracy appearing in each of the subblocks of $\rho$ projected out by some $\proj{P}^i_\set{A}$.

\subsection{Concordant states}

The adjective ``concordant'' is intended to indicate a lack of quantum discord.  Because discord is an asymmetric, bipartite measure, however, it is not completely obvious what this restriction ought to mean with regard to quantum states, especially states of composite systems composed of more than two subsystems.  I choose to label a state as concordant if it has zero discord with respect to any part.  This is codified in the following definition.

\begin{defn}
A state $\rho$ is concordant if $\opfunc{D}_\set{A}(\rho)=0$ for any strict subset $\set{A}$ of the subsystems of $\rho$.\label{def:concordantState}
\end{defn}

In particular, Def.~\ref{def:concordantState} guarantees that $\opfunc{D}_{k}(\rho)=0$ for any $k$ labeling a single subsystem of some concordant state $\rho$.  By
Eq.~\ref{eq:commutesWithProjector}, this implies that, for any concordant state $\rho$, there exists a CSOOP $\{\proj{P}^{i}_k\}$ for every subsystem $k$ such that
\begin{align}
  \rho = \sum_{i} \proj{P}^{i}_k \rho \proj{P}^{i}_k\;. \label{eq:concordantStateCriterion}
\end{align}
An equivalent form of the implication that often proves useful is
\begin{align}
  \rho = \sum_{\vec{i}} \proj{P}^{\vec{i}} \rho \proj{P}^{\vec{i}} = \sum_{\vec{i}} p_{\vec{i}} \proj{P}^{\vec{i}}\label{eq:concordantStateCriterion2}
\end{align}
where $\proj{P}^{\vec{i}} = \prod_{k} \proj{P}^{i_k}_k$ and $\{\proj{P}^{i_k}_k\}$ for fixed $k$ is a CSOOP for the $k$th subsystem.

The reasoning above shows that Def.~\ref{def:concordantState} implies Eq.~\ref{eq:concordantStateCriterion2}, but conversely, any state satisfying Eq.~\ref{eq:concordantStateCriterion2} clearly satisfies Def.~\ref{def:concordantState}.  Thus, Eq.~\ref{eq:concordantStateCriterion2} can be taken as an alternate definition of a concordant state.  In words, a state is concordant if there exists a product basis, that is, a basis arising from the tensor product of local orthogonal bases, such that its density operator is diagonal.

\subsection{Concordant computations\label{subsec:concordantComputations}}

In keeping with standard practice, I adopt a description of quantum computation based on the quantum circuit model, where the evolution of the state of a system is described by a sequence of operators.
Most generally, the operations applied can be chosen probabilistically, based, for example, on the path the computation has taken thus far, as revealed by measurements.
In this model, it is natural to label a computation as concordant if the state of the computer is concordant both initially and after each step of the evolution, a notion formalized below.

\begin{defn}
A quantum computation described by a sequence of operators $\{\supop{G}^t\}$ acting on some input state $\rho^0$ is concordant if each state $\rho^t = \supop{G}^t \circ \cdots \circ\supop{G}^2 \circ \supop{G}^1 (\rho^0)$ is concordant for every path of the computation.\label{def:quantumComputation}
\end{defn}

Being concordant, each computational state might be considered classical for some choice of the classical basis, but a concordant computation is slightly more general than a randomized classical computation in that the product eigenbasis can change from one step to the next.

Definition~\ref{def:quantumComputation} is problematic for questions of computational complexity since it is possible to obscure the difficulty of an algorithm by employing very complex operations or initial states.  The specification of an arbitrary input state $\rho^0$, for example, entails a quantity of real numbers exponential in the number of subsystems, even if $\rho^0$ is concordant.  (See Ref.~\cite{Jozsa03} for a careful treatment of the difficulties posed by the use of real numbers.)  I avoid these problems and simplify the following discussion by initially considering only computations that are conventional, as defined by Def.~\ref{def:conventionalQuantumComputation}.  In Sec.~\ref{sec:extensions} I discuss ways in which the restriction to conventional computations can be relaxed.

\begin{defn}
A conventional quantum computation consists of an input product state diagonal in the standard basis, $\rho^0 = \bigotimes_k \rho^0_k$, followed by a sequence of unitary gates $\{G^t\}$, and concluded by single-subsystem measurements determining the outcome of the computation.  Each $\rho^0_k$ and $G^t$ (when restricted to its support) is required to be efficiently computable.\label{def:conventionalQuantumComputation}
\end{defn}

The evolution of a concordant computation of the form given by Def.~\ref{def:conventionalQuantumComputation} is particularly simple.
Because the spectrum of a density operator is invariant under conjugation by unitary operators, any unitary gate can be considered simply as a change of eigenbasis for the density operator.  For a concordant computation, there is guaranteed to exist a product basis, both before and after a gate, such that the density operator describing the state of the computer is diagonal.  Thus, the effect of any unitary operator can be, at most, to change the product eigenbasis and permute the associated eigenvalues.

More specifically, Lemma~\ref{lem:concordantComputationsAreSimple} shows that a transformation between concordant states induced by a unitary gate with support $\set{G}$ is equivalent to a change of product eigenbasis on $\set{G}$ together with a permutation with support $\set{G}$ of the vectors indexing the eigenvalues.  In general, the unitary gate will not actually be a permutation followed by a change of product eigenbasis but merely be equivalent to one for the given initial state.

\begin{lem}
If $\sigma = G \rho G^\dagger$ where $G$ is a unitary operator with support $\set{G}$, $\rho$ and $\sigma$ are concordant, and
$\rho = \sum_{\vec{i}} p_{\vec{i}} \proj{P}^{\vec{i}}$ then $\sigma = \sum_{\vec{j}} q_{\vec{j}} \proj{P}^{\vec{j}}_{\notG} \proj{Q}^{\vec{j}}_\set{G}$
where $q_{P\cdot\vec{i}} = p_{\vec{i}}$ for some permutation $P$ with support $\set{G}$.\label{lem:concordantComputationsAreSimple}
\end{lem}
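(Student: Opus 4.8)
The plan is to exploit the essential uniqueness of the product eigenbasis guaranteed by concordance, together with the locality of the gate $G$. Write $\set{H} = \notG$ for the complementary subsystems. Since $\rho$ is concordant, $\rho = \sum_{\vec i} p_{\vec i}\proj{P}^{\vec i}$ with $\proj{P}^{\vec i} = \proj{P}^{\vec i_\set{H}}_\set{H}\proj{P}^{\vec i_\set{G}}_\set{G}$. First I would observe that $\sigma = G\rho G^\dagger$ acts as the identity on $\set{H}$, so for every subsystem $k\in\set{H}$ the CSOOP $\{\proj{P}^{i_k}_k\}$ that block-diagonalizes $\rho$ also block-diagonalizes $\sigma$, i.e. $\sigma = \sum_i \proj{P}^i_k\sigma\proj{P}^i_k$. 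Hence $\sigma$ is already diagonal in the product basis $\{\proj{P}^{\vec i_\set{H}}_\set{H}\}$ on the complementary subsystems, and what remains is to determine its structure on $\set{G}$.

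Next I would argue that, since $\sigma$ is concordant, for each $k\in\set{G}$ there is some CSOOP $\{\proj{Q}^{j_k}_k\}$ with $\sigma = \sum_j \proj{Q}^j_k\sigma\proj{Q}^j_k$; combining these with the $\{\proj{P}^{i_k}_k\}$ for $k\in\set{H}$ (using Eq.~\ref{eq:concordantStateCriterion2} applied subsystem by subsystem) shows $\sigma = \sum_{\vec j} q_{\vec j}\,\proj{P}^{\vec j_\set{H}}_\set{H}\proj{Q}^{\vec j_\set{G}}_\set{G}$ for some eigenvalues $q_{\vec j}\ge 0$, where I am free to keep the $\set{H}$-part of the basis equal to that of $\rho$ by the uniqueness statement (Lemma~\ref{lem:projectorsUniqueUpToPartialDegeneracy}, generalized to several parts as in the discussion following Eq.~\ref{eq:commutesWithProjectors}) — more carefully, even if there is degeneracy I may \emph{choose} the common $\set{H}$-basis to match. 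This establishes the claimed form $\sigma = \sum_{\vec j} q_{\vec j}\proj{P}^{\vec j}_{\notG}\proj{Q}^{\vec j}_\set{G}$.

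The remaining task, and the part I expect to be the real content, is the permutation claim: that the multiset of eigenvalues $\{q_{\vec j}\}$ equals $\{p_{\vec i}\}$ \emph{and} that the correspondence can be realized by a permutation $P$ supported on $\set{G}$, meaning $P$ fixes each $\set{H}$-index and permutes only within each fixed value of $\vec i_\set{H}$. That the spectra agree as multisets is immediate from unitary invariance of the spectrum. To get the support restriction, I would compare the two block decompositions with respect to the $\set{H}$-basis: because $G$ acts trivially on $\set{H}$, it maps the block $\proj{P}^{\vec i_\set{H}}_\set{H}\rho\proj{P}^{\vec i_\set{H}}_\set{H}$ (an operator on $\set{G}$, tensored with a rank-one projector on $\set{H}$) to $\proj{P}^{\vec i_\set{H}}_\set{H}\sigma\proj{P}^{\vec i_\set{H}}_\set{H}$, i.e. $G$ restricted to $\set{G}$ diagonalizes each such block separately. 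So within each fixed $\vec i_\set{H}$, the eigenvalues $\{p_{\vec i_\set{H},\vec i_\set{G}}\}_{\vec i_\set{G}}$ and $\{q_{\vec i_\set{H},\vec j_\set{G}}\}_{\vec j_\set{G}}$ are the spectra of the \emph{same} operator on $\set{G}$, hence agree as multisets; choosing, for each value of $\vec i_\set{H}$, a bijection between the $\set{G}$-index sets that matches eigenvalues and assembling them gives the desired $P$ with support $\set{G}$ and $q_{P\cdot\vec i} = p_{\vec i}$.

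The main obstacle is bookkeeping around degeneracy: when eigenvalues coincide, the product eigenbasis of $\sigma$ is not unique, so one must be careful that the freedom in choosing $\{\proj{Q}^{\vec j_\set{G}}_\set{G}\}$ is genuinely available while simultaneously pinning the $\set{H}$-part to that of $\rho$ and while making the per-block bijections consistent. The way around this is to note that all the required choices are \emph{independent} across distinct values of $\vec i_\set{H}$ (since $G$ respects that block structure) and that within a block, matching eigenvalues of two diagonalizations of one Hermitian operator is always possible — so no global obstruction arises, only the need to phrase the argument block-by-block rather than attempting a single simultaneous diagonalization.
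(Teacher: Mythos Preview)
Your argument for the first claim---that $\sigma$ is diagonal in a product basis of the form $\proj{P}^{\vec j}_{\notG}\proj{Q}^{\vec j}_\set{G}$---is correct and essentially identical to the paper's: both use that $G$ commutes with the $\notG$-projectors and that concordance of $\sigma$ supplies the $\set{G}$-projectors.

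The gap is in the permutation step. ``Support $\set{G}$'' means $P$ is an operator $P_\set{G}\otimes I_{\notG}$ (this is how the lemma is used later, e.g.\ as a two-qubit classical gate in Sec.~\ref{subsec:diagnoseDegeneracy}), so the bijection on $\set{G}$-indices must be \emph{the same} for every value of $\vec i_\set{H}$. Your per-block argument only shows that for each fixed $\vec i_\set{H}$ the multisets $\{p_{\vec i_\set{H},\vec i_\set{G}}\}_{\vec i_\set{G}}$ and $\{q_{\vec i_\set{H},\vec j_\set{G}}\}_{\vec j_\set{G}}$ agree, and then you pick a matching bijection block by block. Assembling these gives, in general, a \emph{controlled} permutation whose action on $\set{G}$ depends on $\vec i_\set{H}$---precisely what ``support $\set{G}$'' forbids. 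Your remark that ``the required choices are independent across distinct values of $\vec i_\set{H}$'' is the opposite of what is needed: they must be made \emph{consistently}, not independently.

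The paper supplies the missing idea via a graph $\Gamma$ on the $\set{G}$-projectors alone: put an edge between $\proj{Q}^{\vec j}_\set{G}$ and $G\proj{P}^{\vec i}_\set{G}G^\dagger$ whenever they are non-orthogonal. Because both families diagonalize $\sigma^{\proj{P}^{\vec j}_{\notG}}_\set{G}$ for \emph{every} $\vec j_{\notG}$, any two projectors joined by an edge carry the same eigenvalue in every block simultaneously, and each connected component of $\Gamma$ contains equally many projectors from each family. Choosing one bijection on $\set{G}$-indices that respects the connected components of $\Gamma$ then works uniformly across all $\vec i_\set{H}$, yielding a genuine $P$ with support $\set{G}$. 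Your block-by-block spectrum comparison does not see this cross-block rigidity; adding the observation that non-orthogonality of $\proj{Q}^{\vec j}_\set{G}$ and $G\proj{P}^{\vec i}_\set{G}G^\dagger$ forces $q_{\vec i_\set{H},\vec j_\set{G}}=p_{\vec i_\set{H},\vec i_\set{G}}$ for all $\vec i_\set{H}$ is exactly what closes the gap.
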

\begin{proof}\ \\

Since $\sigma$ is concordant there exists $\{\proj{Q}^{\vec{j}}\}$ such that
\begin{align*}
  \sigma = \sum_{\vec{j}_\set{G}} \proj{Q}^{\vec{j}}_\set{G} \sigma \proj{Q}^{\vec{j}}_\set{G}\;,
\end{align*}
where $\proj{Q}^{\vec{j}}_\set{G} = \prod_{k\in\set{G}} \proj{Q}^{j_k}_k$ and likewise for subsequent similar projectors. Moreover,
\begin{align*}
  \begin{split}
    \sum_{\vec{i}_\notG} \proj{P}^{\vec{i}}_\notG \sigma \proj{P}^{\vec{i}}_\notG &= \sum_{\vec{i}_\notG} \proj{P}^{\vec{i}}_\notG G \rho G^\dagger \proj{P}^{\vec{i}}_\notG \\
    &= G \sum_{\vec{i}_\notG} \proj{P}^{\vec{i}}_\notG \rho \proj{P}^{\vec{i}}_\notG G^\dagger = G \rho G^\dagger = \sigma \;. \label{eq:unchangedBasesForUntouchedSubsystems}
  \end{split}
\end{align*}
Thus, $\sigma$ can be written in the form
\begin{align*}
  \sigma = \sum_{\vec{j}} \proj{P}^{\vec{j}}_{\notG} \proj{Q}^{\vec{j}}_\set{G} \sigma \proj{Q}^{\vec{j}}_\set{G} \proj{P}^{\vec{j}}_{\notG} = \sum_{\vec{j}} q_{\vec{j}} \proj{P}^{\vec{j}}_{\notG} \proj{Q}^{\vec{j}}_\set{G}\;.
\end{align*}

To see that the specified permutation exists, consider a graph $\Gamma$ where the nodes correspond to the projectors $\proj{Q}^{\vec{j}}_\set{G}$ and $G \proj{P}^{\vec{i}}_\set{G} G^\dagger$ and two nodes are connected if their associated projectors are not orthogonal.
Since $\{\proj{Q}^{\vec{j}}_\set{G}\}$ and $\{G \proj{P}^{\vec{i}}_\set{G} G^\dagger\}$ project onto two eigenbases for the state
\begin{align*}
  {\sigma}^{\proj{P}^{\vec{j}}_\notG}_\set{G} \propto \sum_{\vec{j}_\set{G}} q_{\vec{j}} \proj{Q}^{\vec{j}}_\set{G} = \sum_{\vec{j}_\set{G}} p_{\vec{j}} G \proj{P}^{\vec{j}}_\set{G} G^\dagger\;,
\end{align*}
projectors connected in $\Gamma$ are associated, by the uniqueness properties of the spectral decomposition, with the same eigenvalue of ${\sigma}^{\proj{P}^{\vec{j}}_\notG}_\set{G}$ and therefore with the same eigenvalues of $\sigma$.
Two spectral decompositions of the same density operator are related by a unitary transformation, so each connected component of $\Gamma$ includes an equal number of projectors from $\{\proj{Q}^{\vec{j}}_\set{G}\}$ and $\{G \proj{P}^{\vec{i}}_\set{G} G^\dagger\}$.  Thus, it is possible to assign $q_{\vec{j}} = p_{\vec{i}}$ where $\vec{j}=P\cdot\vec{i}$, $P$ is a permutation such that $\vec{j}_\notG=\vec{i}_\notG$, and $\{\proj{Q}^{\vec{j}}_\set{G}\}$ and $\{G \proj{P}^{\vec{i}}_\set{G} G^\dagger\}$ are in the same connected component of $\Gamma$.
\end{proof}

\begin{figure*}
\code{
\begin{tabbing}
01\hspace{2ex}\=For \=each\= \ subsystem $k$:\\
02   \>\>Choose $i_k$ according to the probability distribution $\func{Pr}[i_k=w] = \brakket{w}{{U^0_k}^\dagger \rho^0_k U^0_k}{w}$.\\
03   \>$\vec{j} := P \cdot \vec{i}$\\
04   \>For each measured subsystem $k$:\\
05   \>\>Choose $h_k$ according to the probability distribution $\func{Pr}[h_k=w] = \abs{\brakket{w}{U^s_k}{j_k}}^2$.\\
06   \>Output $\vec{h}$.
\end{tabbing}
}
\caption{Pseudocode for simulating a conventional concordant computation.  $U^0$ and $U^s$ are unitary product operators identifying the initial and final product eigenbases respectively and $P$ is the permutation that acts on $\rho^0$ equivalently to the specified sequence of unitary operators.  Pseudocode for converting a sequence of two-qubit unitary operators in a concordant computation into an equivalent classical permutation and change of basis is given in Fig.~\ref{fig:circuitConversionPseudocode}.\label{fig:concordantComputationSimulationPseudocode}}
\end{figure*}

\section{Simulating a conventional concordant computation \label{sec:simulatingConventionalConcordantComputations}}

In the previous section I show that the transformation of one concordant state to another by a unitary operator with support $\set{G}$ is equivalent to a permutation of eigenvalues together with a change of product eigenbasis on $\set{G}$.  Combined with the fact that a density operator can be considered as a probabilistic mixture of its eigenstates, this suggests the following strategy for simulating a conventional concordant computation:  Find a change of product eigenbasis and permutation of the vectors labeling eigenstates (and, therefore, the associated eigenvalues) equivalent to each unitary gate in the computation, and then generate an output of the computation by appropriately picking a vector labeling an eigenstate of the input state, applying the derived permutations to the chosen vector, and evaluating the final measurement on the indicated product state.

It is not immediately obvious that the described simulation is feasible because the permutation and change of eigenbasis equivalent to each unitary operator is dependent on the overall state of the computer.
Nonetheless, the following subsections provide detailed descriptions of the necessary subcomponents of such a simulation for the special case of two-qubit unitary gates, thereby proving Theorem~\ref{thm:twoQubitConcordantCompsSimulable}.
Section~\ref{subsec:simulationGivenHints} shows how a conventional concordant computation can be simulated given the permutation and eigenbasis change equivalent to each unitary operator.  Section~\ref{subsec:updatingProductBasis} proves that it is possible to efficiently determine a permutation and change of eigenbasis equivalent to a unitary operator from the degeneracy of the pre-gate state.  Finally, Sec.~\ref{subsec:diagnoseDegeneracy} explains how the relevant degeneracy can be found from the previously applied permutations and an input product state, so long as the computation contains only one- and two-qubit unitary gates.
In addition to the concordant-state condition given by Eq.~\ref{eq:concordantStateCriterion2}, I employ an equivalent definition: a state $\rho$ is concordant if and only if there exists a unitary product operator $U=\bigotimes_k U_k$ such that $U^\dagger \rho U$ is diagonal in the standard basis.

\begin{thm}
A conventional concordant computation with unitary operators having support on only one or two qubits can be efficiently simulated by a classical computer.\label{thm:twoQubitConcordantCompsSimulable}
\end{thm}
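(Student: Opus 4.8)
The plan is to carry out the strategy already outlined — reduce each unitary gate to an equivalent classical permutation together with a change of product eigenbasis, then sample an output trajectory — in three parts matching Secs.~\ref{subsec:simulationGivenHints}--\ref{subsec:diagnoseDegeneracy}. First (Sec.~\ref{subsec:simulationGivenHints}), I would show that the computation can be simulated \emph{given}, for each gate $G^t$ with support $\set{G}^t$, the permutation $P^t$ and the post-gate local eigenbasis furnished by Lemma~\ref{lem:concordantComputationsAreSimple}. Composing these, the unitary part of the computation acts on the eigenvalue-indexing vectors as the single permutation $P = P^s\cdots P^1$, while the initial and final product eigenbases are recorded as unitary product operators $U^0 = \bigotimes_k U^0_k$ and $U^s = \bigotimes_k U^s_k$. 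Since $\rho^0 = \bigotimes_k \rho^0_k$ is a product of efficiently computable single-qubit states, an eigenvector label $\vec{i}$ of $\rho^0$ in the $U^0$ basis can be drawn coordinate by coordinate from $\func{Pr}[i_k=w] = \brakket{w}{{U^0_k}^\dagger\rho^0_k U^0_k}{w}$; applying $P$ and then sampling each measured qubit's outcome from $\abs{\brakket{w}{U^s_k}{j_k}}^2$ reproduces the output distribution, as in Fig.~\ref{fig:concordantComputationSimulationPseudocode}. This runs in polynomial time provided the $P^t$ and the $U^s_k$ are efficiently computable and the computation has polynomially many gates; note that $P$ need not be materialized, since the gates may be processed sequentially, updating a single running label.

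Second (Sec.~\ref{subsec:updatingProductBasis}), I would show that a valid $P^t$ and post-gate local eigenbasis for one gate can be computed efficiently \emph{given the degeneracy structure of the pre-gate state on that gate's support}. Writing the pre-gate concordant state as $\rho^{t-1} = \sum_{\vec{i}} p_{\vec{i}} \proj{P}^{\vec{i}}$ as in Eq.~\ref{eq:concordantStateCriterion2}, the remarks following Lemma~\ref{lem:projectorsUniqueUpToPartialDegeneracy} show that the post-gate local basis on $\set{G}^t$ must simultaneously diagonalize the conditional two-qubit states obtained by projecting $\rho^{t-1}$ onto each standard-basis configuration of the complement, and is determined up to the degeneracy common to all of these; the permutation is then fixed by the connected-component matching used in the proof of Lemma~\ref{lem:concordantComputationsAreSimple}. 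Because the gate acts on at most two qubits, each conditional state is a $4\times4$ (or smaller) density operator, so once the list of distinct degeneracy patterns among these conditional states is in hand, what remains is constant-size linear algebra.

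Third (Sec.~\ref{subsec:diagnoseDegeneracy}), and this is where the restriction to two-qubit gates enters essentially, I would show that the required degeneracy information can be maintained in a polynomial-size data structure derived from the input eigenvalues and the permutations applied so far. After undoing those permutations, each eigenvalue $p_{\vec{i}}$ is a product of single-qubit eigenvalues of $\rho^0$; a two-qubit gate can only merge or split degeneracy classes among labels differing on its two qubits with the rest held fixed, so each update alters only a bounded portion of the bookkeeping, from which one can read off, for any prospective gate on any pair of qubits, the finite list of conditional-state degeneracy patterns needed in the second part. The main obstacle — and the reason the construction does not obviously extend to gates on three or more qubits — lies exactly here: for larger gates the conditional marginals no longer inhabit a fixed low-dimensional space, their catalog of degeneracy patterns is no longer a short enumerable list, and the ways degeneracies can interlock across the gate's support appear to resist any succinct encoding. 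Combining the three parts yields the theorem.
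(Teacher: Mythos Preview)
Your first two parts track the paper's Secs.~\ref{subsec:simulationGivenHints} and~\ref{subsec:updatingProductBasis} closely and are fine. The gap is in your third part, which is also where the paper's real technical content lies.

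You propose to ``maintain the required degeneracy information in a polynomial-size data structure'' and assert that ``each update alters only a bounded portion of the bookkeeping,'' but you never say what this data structure is or why updates are local. The difficulty is precisely that they are \emph{not} local in any obvious sense: to test degeneracy on $\set{G}^t$ at step $t$ one must decide whether a swap $Q$ supported on $\set{G}^t$ commutes with $\rho^{t-1}$, i.e., whether $P^\dagger Q P$ commutes with $\rho^0$ for $P=P^{t-1}\cdots P^1$. Although $Q$ touches two qubits, $P^\dagger Q P$ can act nontrivially on all $n$, so the question ``do these two eigenvalue labels, differing only on $\set{G}^t$, map under $P^{-1}$ to labels with equal $\prod_k e_k^{i_k}$?'' is a global one about $P$. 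Nothing you have written explains why this can be answered with polynomially many operations.

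The paper's mechanism is different and specific: every permutation on one or two \emph{bits} is affine over $\mathbb{F}_2$, hence so is any composition $P$, hence $P^\dagger Q P$ is an affine involution on $\{0,1\}^n$. Lemma~\ref{lem:testingCommutativity} then shows that an affine involution commutes with a diagonal product state iff the commutation identity holds on the $n{+}1$ vectors $\vec 0$ and $\{\vec e_l\}$, giving an $O(n)$ test per swap $Q$. This affineness is the load-bearing idea you are missing, and it is also the correct reason the argument does not extend: three-bit (or single-qutrit) permutations need not be affine, and the paper notes that the resulting commutativity test would solve 3-SAT. Your stated obstruction---that for larger gates ``the conditional marginals no longer inhabit a fixed low-dimensional space''---is not the issue; three-qubit conditional states are still $8\times 8$.
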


\begin{figure*}
\code{
\begin{tabbing}
01\hspace{2ex}\=Stor\=e th\=e un\=itar\=y op\=erat\=or d\=efining the initial product eigenbasis in $U$.\\
02   \>$P := I$ \>\>\>\>\>\>\>\textcolor{gray}{(where $P$ is stored as a sequence of two-bit permutations)}\\
03   \>For each gate $G$ in the circuit:\\
04   \>\>If $G$ has support on only one qubit:\\
05   \>\>\>$U:=G U$\\
06   \>\>Else if $G$ has support on some pair of qubits $\set{G}=\{k,l\}$:\\
07   \>\>\>For each permutation $Q$ which exchanges two states of the standard basis of part $\set{G}$:\\
08   \>\>\>\>If $P^\dagger Q P$ commutes with the initial density operator:\\
09   \>\>\>\>\>The states exchanged by $Q$ are degenerate.  Store this fact.\\
10   \>\>\>Solve for $V$, and thus the new product eigenbasis, using the known degeneracy and the constraint\\
\>\>\>that the post-gate state be diagonal in that basis.\\
11   \>\>\>Pick a permutation $R$ such that $V R U^\dagger$ and $G$ transform the state identically.\\
12   \>\>\>$P := R P$ \\
13   \>\>\>$U := V$\\
14   \>Output $P$ and $U$.
\end{tabbing}
}
\caption{Pseudocode for converting the sequence of unitary gates in a conventional concordant computation composed of one- and two-qubit gates to an equivalent permutation and change of basis.\label{fig:circuitConversionPseudocode}}
\end{figure*}

\subsection{Simulation given many hints \label{subsec:simulationGivenHints}}

Consider a conventional concordant computation for which the sequence of unitary operators employed, $\{G^t\}$, is known to act equivalently to the sequence $\left\{U^t P^t {U^{t-1}}^\dagger\right\}$ where each $P^t$ is a permutation (that is, a classical reversible gate) with the same support as $G^t$ and each $U^t$ is a unitary product operator that transforms from the standard basis to the product eigenbasis at time step $t$.
Given this information, the initial state $\rho^0$ must be of the form
\begin{align}
  \rho^0 &= \sum_{\vec{i}} p^0_{\vec{i}} U^0\ket{\vec{i}}\bra{\vec{i}}{U^0}^\dagger
\end{align}
where each $\ket{\vec{i}}$ is an element of the standard basis. (By definition, $U^0$ is trivial for a conventional computation.)
The state of the computer after one step of the computation is
\begin{align*}
  \begin{split}
    \rho^1 &= \sum_{\vec{i}} p^0_{\vec{i}} G^1 U^0\ket{\vec{i}}\bra{\vec{i}}{U^0}^\dagger {G^1}^{\dagger} \\
    &= \sum_{\vec{i}} p^0_{\vec{i}} U^1 {U^1}^\dagger G^1 U^0\ket{\vec{i}}\bra{\vec{i}}{U^0}^\dagger {G^1}^{\dagger} U^1 {U^1}^\dagger\\
    &= \sum_{\vec{i}} p^0_{\vec{i}} U^1 P^1 \ket{\vec{i}}\bra{\vec{i}}{P^1}^{\dagger} {U^1}^\dagger
  \end{split}
\end{align*}
where $P^1$ is a permutation that acts identically to ${U^1}^\dagger G^1 U^0$ on ${U^0}^\dagger \rho^0 U^0$.  Iterating this process yields
\begin{align}
    \rho^s &= \sum_{\vec{i}} p^0_{\vec{i}} U^s \left(\prod_{t=s}^1 P^t\right) \ket{\vec{i}}\bra{\vec{i}} \left(\prod_{t=1}^s {P^t}^{\dagger}\right) {U^s}^\dagger \label{eq:preMeasurementState}
\end{align}
where each $P^t$ is a permutation that acts identically to ${U^t}^\dagger G^t U^{t-1}$ on ${U^{t-1}}^\dagger \rho^{t-1} U^{t-1}$.

The measurement statistics of a mixed state are identical to those of a probabilistically chosen state in its decomposition where the probability is given by the coefficient of the term associated with that state.
Thus, the expression for the final pre-measurement state shown in Eq.~\ref{eq:preMeasurementState} suggests the following simple technique for simulating the computation:  Choose a single vector $\vec{i}$ according to the probability distribution $p^0_{\vec{i}}$, which can be done efficiently since $\rho^0$ is a product state.  Apply the permutation $\prod_{t=s}^1 P^t$ to $\vec{i}$ to obtain a new vector $\vec{j}$ identifying one component of the final pre-measurement state.  And last, for each measured subsystem $k$ choose a measurement outcome $h_k$ according to the probability distribution
\begin{align*}
  \func{Pr}[h_k=w] = \abs{\brakket{w}{{U^s_k}}{j_k}}^2\;.
\end{align*}
Fig.~\ref{fig:concordantComputationSimulationPseudocode} presents pseudocode illustrating this method.

\subsection{Updating the product eigenbasis\label{subsec:updatingProductBasis}}

In the $t$th step of a conventional concordant computation, the unitary gate $G^t$ is applied to a concordant state $\rho^{t-1}$ to yield a concordant state $\rho^t$.  As explained in Sec.~\ref{subsec:concordantComputations}, the effect of $G^t$ is identical to that of a permutation $P^t$ of the vectors labeling eigenstates followed by a change of product eigenbasis.  Thus, if $U^{t-1}$ and $U^t$ are unitary product operators that transform from the standard basis to the product eigenbases at times $t-1$ and $t$, respectively, then $\rho^t=G^t\rho^{t-1}{G^t}^\dagger = U^t P^t {U^{t-1}}^\dagger \rho^{t-1} U^{t-1} {P^t}^\dagger {U^t}^\dagger$ for some $P^t$ which permutes the elements of the standard basis.
Moreover, Lemma~\ref{lem:concordantComputationsAreSimple} shows that there exists a product eigenbasis for $\rho^t$ consistent with $U^t$ such that $U^t_k = U^{t-1}_k$ for all $k$ not in $\set{G}^t$, the support of $G^t$, and additionally, that for such a product eigenbasis there exists a permutation $P^t$ with support $\set{G}^t$.

The problem of finding $U^t_k$ for $k\not\in\set{G}^t$ is addressed by Lemma~\ref{lem:eigenbasisFromPartDegeneracy}, which shows that the remaining components of a product eigenbasis for $\rho^t$ can be calculated given one additional piece of information, the degeneracy of part $\set{G}^t$ of $\rho^{t-1}$.
This calculation is efficient in that it entails solving a system of equations whose number depends only on the number of subsystems in $\set{G}^t$ and their dimension, not on the total number of subsystems in the computation.
The appropriate permutation is easily found from the eigenbases for $\rho^{t-1}$ and $\rho^t$; it is sufficient to pick any permutation mapping eigenprojectors of $\rho^{t-1}$ to eigenprojectors of $\rho^t$ which are in the same connected component of a graph $\Gamma$ defined as per Lemma~\ref{lem:concordantComputationsAreSimple}.
(Remember that the permutation $P^t$ can be assumed to have support $\set{G}^t$, thereby limiting the size of the graph that must be considered.)
As indicated by Theorem~\ref{thm:simulationForGenericStates}, these results are sufficient to enable the efficient simulation of concordant computations with most input states.  The question of arbitrary input states is taken up in the next section.

\begin{lem}
For $\rho = \sum_{\vec{i}} p_{\vec{i}} \proj{P}^{\vec{i}}$ and $\sigma = G \rho G^\dagger$, where $G$ is a unitary gate with support $\set{G}$,
$\{\proj{Q}^{\vec{j}}\}$ satisfies $\sigma  = \sum_{\vec{j}} \proj{Q}^{\vec{j}}_\set{G} \sigma \proj{Q}^{\vec{j}}_\set{G}$ if and only if $\tr_\set{G}(\rho \proj{P}^{\vec{i}}_\set{G})=\tr_\set{G}(\rho \proj{P}^{\vec{h}}_\set{G})$ for all $\vec{h}$, $\vec{i}$, and $\vec{j}$ such that $G \proj{P}^{\vec{h}}_\set{G} G^\dagger \proj{Q}^{\vec{j}}_\set{G} \ne 0$ and $G \proj{P}^{\vec{i}}_\set{G} G^\dagger \proj{Q}^{\vec{j}}_\set{G} \ne 0$. \label{lem:eigenbasisFromPartDegeneracy}
\end{lem}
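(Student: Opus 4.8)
\emph{Proof idea.} The plan is to recognize this as the ``restricted to the gate's support'' analogue of Lemma~\ref{lem:projectorsUniqueUpToPartialDegeneracy} and to reduce to that lemma, playing the role of ``$\set{B}$'' with $\set{G}$ (the support of $G$) and the role of ``$\set{A}$'' with $\notG$. First I would observe that the conjugated projectors $G\proj{P}^{\vec{i}}_\set{G}G^\dagger$, indexed by $\vec{i}_\set{G}$, form a CSOOP on $\set{G}$, being the image under the unitary $G$ of the CSOOP $\{\proj{P}^{\vec{i}}_\set{G}\}$ on $\set{G}$. Moreover $\sigma$ is block-diagonal with respect to this CSOOP: since $\rho=\sum_{\vec{i}}p_{\vec{i}}\proj{P}^{\vec{i}}$ commutes with every $\proj{P}^{\vec{i}}_\set{G}$, we have $\rho=\sum_{\vec{i}_\set{G}}\proj{P}^{\vec{i}}_\set{G}\rho\proj{P}^{\vec{i}}_\set{G}$, and conjugating by $G$ gives $\sigma=\sum_{\vec{i}_\set{G}}(G\proj{P}^{\vec{i}}_\set{G}G^\dagger)\sigma(G\proj{P}^{\vec{i}}_\set{G}G^\dagger)$.

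With this in hand, Lemma~\ref{lem:projectorsUniqueUpToPartialDegeneracy} applies directly, with the two CSOOPs on $\set{G}$ taken to be $\{G\proj{P}^{\vec{i}}_\set{G}G^\dagger\}$ and $\{\proj{Q}^{\vec{j}}_\set{G}\}$: it says that $\sigma=\sum_{\vec{j}}\proj{Q}^{\vec{j}}_\set{G}\sigma\proj{Q}^{\vec{j}}_\set{G}$ if and only if $\tr_\set{G}(\sigma\,G\proj{P}^{\vec{i}}_\set{G}G^\dagger)=\tr_\set{G}(\sigma\,\proj{Q}^{\vec{j}}_\set{G})$ for all $\vec{i},\vec{j}$ with $G\proj{P}^{\vec{i}}_\set{G}G^\dagger\proj{Q}^{\vec{j}}_\set{G}\ne 0$. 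I would then simplify the left side: because $G$ acts only on $\set{G}$, $\tr_\set{G}(GXG^\dagger)=\tr_\set{G}(X)$ for any $X$, so $\tr_\set{G}(\sigma\,G\proj{P}^{\vec{i}}_\set{G}G^\dagger)=\tr_\set{G}(G\rho\proj{P}^{\vec{i}}_\set{G}G^\dagger)=\tr_\set{G}(\rho\proj{P}^{\vec{i}}_\set{G})$. Hence the Lemma~\ref{lem:projectorsUniqueUpToPartialDegeneracy} criterion becomes: $\tr_\set{G}(\rho\proj{P}^{\vec{i}}_\set{G})=\tr_\set{G}(\sigma\,\proj{Q}^{\vec{j}}_\set{G})$ whenever $G\proj{P}^{\vec{i}}_\set{G}G^\dagger\proj{Q}^{\vec{j}}_\set{G}\ne 0$.

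It then remains to check that this ``overlap'' condition is equivalent to the stated ``common-$\vec{j}$'' condition. The forward direction is immediate: if $\tr_\set{G}(\rho\proj{P}^{\vec{i}}_\set{G})$ equals $\tr_\set{G}(\sigma\,\proj{Q}^{\vec{j}}_\set{G})$ for every $\vec{i}$ overlapping a given $\vec{j}$, then any two such vectors $\vec{h},\vec{i}$ give equal values. For the converse I would use the alternative form of a projector-commuting state displayed after Eq.~\ref{eq:commutesWithProjector}, together with the partial-trace identity above, to write $\sigma=\sum_{\vec{h}_\set{G}}\tr_\set{G}(\rho\proj{P}^{\vec{h}}_\set{G})\otimes(G\proj{P}^{\vec{h}}_\set{G}G^\dagger)$, so that $\tr_\set{G}(\sigma\,\proj{Q}^{\vec{j}}_\set{G})=\sum_{\vec{h}_\set{G}}\tr(G\proj{P}^{\vec{h}}_\set{G}G^\dagger\proj{Q}^{\vec{j}}_\set{G})\,\tr_\set{G}(\rho\proj{P}^{\vec{h}}_\set{G})$ is a convex combination of the operators $\tr_\set{G}(\rho\proj{P}^{\vec{h}}_\set{G})$: the weights $\tr(G\proj{P}^{\vec{h}}_\set{G}G^\dagger\proj{Q}^{\vec{j}}_\set{G})$ are nonnegative, sum to $\tr(\proj{Q}^{\vec{j}}_\set{G})=1$, and are nonzero precisely for the $\vec{h}$ overlapping $\vec{j}$. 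Under the common-$\vec{j}$ hypothesis all those $\tr_\set{G}(\rho\proj{P}^{\vec{h}}_\set{G})$ equal $\tr_\set{G}(\rho\proj{P}^{\vec{i}}_\set{G})$ for any fixed overlapping $\vec{i}$, so the combination equals that common value; the overlap condition then holds, and Lemma~\ref{lem:projectorsUniqueUpToPartialDegeneracy} yields $\sigma=\sum_{\vec{j}}\proj{Q}^{\vec{j}}_\set{G}\sigma\proj{Q}^{\vec{j}}_\set{G}$.

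The reduction to Lemma~\ref{lem:projectorsUniqueUpToPartialDegeneracy} does the real work, so I expect no serious obstacle; the only point needing care is the last equivalence --- verifying that the combination expressing $\tr_\set{G}(\sigma\,\proj{Q}^{\vec{j}}_\set{G})$ is genuinely a convex combination supported exactly on the overlapping indices. This is entirely parallel to the manipulations already carried out in the proof of Lemma~\ref{lem:projectorsUniqueUpToPartialDegeneracy}, so it introduces nothing new: in effect the present lemma is that lemma with ``$\set{B}$'' replaced by the gate's support $\set{G}$ and one of the two CSOOPs pushed through the unitary $G$.
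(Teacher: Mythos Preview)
Your proposal is correct and follows essentially the same route as the paper: both first verify that $\sigma$ is block-diagonal with respect to the conjugated CSOOP $\{G\proj{P}^{\vec{i}}_\set{G}G^\dagger\}$, then invoke Lemma~\ref{lem:projectorsUniqueUpToPartialDegeneracy} to obtain the intermediate condition $\tr_\set{G}(\sigma\,G\proj{P}^{\vec{h}}_\set{G}G^\dagger)=\tr_\set{G}(\sigma\,\proj{Q}^{\vec{j}}_\set{G})$ whenever the two projectors overlap, and finally translate this into the stated condition on $\rho$. The paper is terser at the last step, simply asserting the equivalence; your explicit use of the identity $\tr_\set{G}(G X G^\dagger)=\tr_\set{G}(X)$ and the convex-combination expansion of $\tr_\set{G}(\sigma\,\proj{Q}^{\vec{j}}_\set{G})$ makes the reasoning more transparent but introduces no new ideas.
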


\begin{proof}
\begin{align*}
\sum_{\vec{i}} G \proj{P}^{\vec{i}}_\set{G} G^\dagger \sigma G \proj{P}^{\vec{i}}_\set{G} G^\dagger = \sum_{\vec{i}} G \proj{P}^{\vec{i}}_\set{G} \rho \proj{P}^{\vec{i}}_\set{G} G^\dagger = G \rho G^\dagger = \sigma\;,
\end{align*}
so by Lemma~\ref{lem:projectorsUniqueUpToPartialDegeneracy}, $\{\proj{Q}^{\vec{j}}\}$ satisfies $\sigma  = \sum_{\vec{j}} \proj{Q}^{\vec{j}}_\set{G} \sigma \proj{Q}^{\vec{j}}_\set{G}$ if and only if $\tr_\set{G}(\sigma G \proj{P}^{\vec{h}}_\set{G} G^\dagger)=\tr_\set{G}(\sigma \proj{Q}^{\vec{j}}_\set{G})$ for all $G \proj{P}^{\vec{h}}_\set{G} G^\dagger \proj{Q}^{\vec{j}}_\set{G} \ne 0$.
Given $\rho$, $\sigma$, $G$, $\{\proj{P}^{\vec{i}}\}$, and $\{\proj{Q}^{\vec{j}}\}$ as defined, the condition $\tr_\set{G}(\sigma G \proj{P}^{\vec{h}}_\set{G} G^\dagger)=\tr_\set{G}(\sigma \proj{Q}^{\vec{j}}_\set{G})$ for all $G \proj{P}^{\vec{h}}_\set{G} G^\dagger \proj{Q}^{\vec{j}}_\set{G} \ne 0$ is equivalent to $\tr_\set{G}(\rho \proj{P}^{\vec{h}}_\set{G})=\tr_\set{G}(\rho \proj{P}^{\vec{i}}_\set{G})$ for all $G \proj{P}^{\vec{h}}_\set{G} G^\dagger \proj{Q}^{\vec{j}}_\set{G} \ne 0$ and $G \proj{P}^{\vec{i}}_\set{G} G^\dagger \proj{Q}^{\vec{j}}_\set{G} \ne 0$.
\end{proof}

Fig.~\ref{fig:circuitConversionPseudocode} presents pseudocode for an algorithm calculating the necessary sequence of permutations and basis changes.

\begin{thm}
A conventional concordant computation with an input product state that is generic can be efficiently simulated by a classical computer.\label{thm:simulationForGenericStates}
\end{thm}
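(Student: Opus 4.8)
The plan is to assemble the pieces already in place---Lemma~\ref{lem:concordantComputationsAreSimple}, Lemma~\ref{lem:eigenbasisFromPartDegeneracy}, and the two routines of Figs.~\ref{fig:circuitConversionPseudocode} and~\ref{fig:concordantComputationSimulationPseudocode}---and to observe that the one ingredient of the general program not yet shown to be efficient, namely diagnosing the degeneracy of part $\set{G}^t$ of $\rho^{t-1}$ that feeds Lemma~\ref{lem:eigenbasisFromPartDegeneracy}, is vacuous for a generic input: there is simply no degeneracy to diagnose. Concretely, I would (i) show that for a generic product input every $\rho^t$ has all eigenvalues distinct and exhibits no part-degeneracy on any subset; (ii) deduce that the circuit-conversion routine of Fig.~\ref{fig:circuitConversionPseudocode}, run with this (empty) degeneracy data, reduces at each gate to a linear-algebra computation whose size depends only on $|\set{G}^t|$; and (iii) invoke the simulation routine of Fig.~\ref{fig:concordantComputationSimulationPseudocode}, which is efficient once the sequence $\{(U^t,P^t)\}$ has been produced. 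It is worth noting that none of this uses the restriction to two-qubit gates that Theorem~\ref{thm:twoQubitConcordantCompsSimulable} will require; that restriction enters only through the arbitrary-input analysis of Sec.~\ref{subsec:diagnoseDegeneracy}.

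For step (i): the eigenvalues of $\rho^0=\bigotimes_k\rho^0_k$ are the products $\prod_k p^0_{k,i_k}$ of the local eigenvalues, and two of these products coincide only on a measure-zero set of local eigenvalue tuples, so a generic product input has all of its eigenvalues distinct; since the unitary gates preserve the spectrum, so does every $\rho^t$. I would then observe that this rules out part-degeneracy everywhere: writing the now-unique product eigenbasis of $\rho^t$ as $\{\proj{P}^{\vec{i}}\}$ and fixing any subset $\set{G}$, the operators $\tr_\set{G}(\rho^t\proj{P}^{\vec{i}}_\set{G})$, indexed by $\vec{i}_\set{G}$, are diagonal in the induced eigenbasis of $\notG$, and their eigenvalue sets partition the spectrum of $\rho^t$; being pairwise disjoint and nonempty, no two of these operators can be equal. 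In particular the commutation test of Fig.~\ref{fig:circuitConversionPseudocode}---does $P^\dagger Q P$ commute with $\rho^0$?---fails for every nontrivial transposition $Q$, so the routine records no degeneracy at any gate.

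For step (ii): with no recorded degeneracy, Lemma~\ref{lem:eigenbasisFromPartDegeneracy} forces the post-gate eigenbasis on $\set{G}^t$ to be exactly $\{G^t\proj{P}^{\vec{i}}_{\set{G}^t}{G^t}^\dagger\}$, while Lemma~\ref{lem:concordantComputationsAreSimple}, which applies because the computation is concordant, guarantees further that this is a \emph{product} basis agreeing with $U^{t-1}$ off $\set{G}^t$. Solving for the new eigenbasis then amounts to writing the known $2^{|\set{G}^t|}\times 2^{|\set{G}^t|}$ matrix $G^t U^{t-1}_{\set{G}^t}$ as $U^t_{\set{G}^t}\,\Pi\,D$, a product operator times a permutation times a diagonal---something that can be read off by extracting the tensor structure of its columns through partial traces, and that is guaranteed to succeed precisely by the concordance hypothesis. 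The per-gate permutation $R$ of Fig.~\ref{fig:circuitConversionPseudocode} is then obtained directly, the connected components of the graph $\Gamma$ of Lemma~\ref{lem:concordantComputationsAreSimple} being single edges in the nondegenerate case; single-qubit gates are handled trivially by updating the corresponding $U_k$. Each gate therefore costs work depending only on $|\set{G}^t|$ beyond evaluating the efficiently-computable restriction of $G^t$, so Fig.~\ref{fig:circuitConversionPseudocode} runs in time polynomial in the circuit size.

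For step (iii), feeding $\{(U^t,P^t)\}$ and the product distribution $p^0$ to Fig.~\ref{fig:concordantComputationSimulationPseudocode} finishes the job: sampling $\vec{i}$ costs $O(n)$ because $\rho^0$ is a product state, applying $\prod_t P^t$ is a single pass over the stored list of permutations, and each measured qubit's outcome is drawn from a two-outcome distribution built from one column of a stored $2\times 2$ matrix; by Eq.~\ref{eq:preMeasurementState} the resulting distribution over outputs matches that of the computation. I expect the main obstacle to be making step (i) airtight---one must check that ``generic,'' a condition imposed only on $\rho^0$, genuinely rules out \emph{every} kind of degeneracy used downstream (spectral degeneracy and part-degeneracy on every subset, at every step)---together with the companion point in step (ii) that the decomposition of $G^t U^{t-1}_{\set{G}^t}$ is solvable exactly under the standing assumption that the computation is concordant. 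It is the failure of these facts for non-generic inputs that forces the more delicate treatment of Sec.~\ref{subsec:diagnoseDegeneracy}.
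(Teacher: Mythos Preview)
Your proposal is correct and follows essentially the same approach as the paper: the paper's own proof is a single sentence---``A generic product state has no degenerate eigenvalues, so the simulation method as outlined thus far is sufficient for such input states''---and your steps (i)--(iii) simply unpack what that sentence leaves implicit, invoking the same machinery (Lemmas~\ref{lem:concordantComputationsAreSimple} and~\ref{lem:eigenbasisFromPartDegeneracy} and the routines of Figs.~\ref{fig:circuitConversionPseudocode} and~\ref{fig:concordantComputationSimulationPseudocode}) and making the same observation that the degeneracy-diagnosis step is vacuous for a generic input. Your added detail---the measure-zero argument for distinctness, the partition argument ruling out part-degeneracy, and the remark that the two-qubit restriction is not used here---is all sound and consistent with the paper's intent.
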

\begin{proof}
A generic product state has no degenerate eigenvalues, so the simulation method as outlined thus far is sufficient for such input states.
\end{proof}

\subsection{Diagnosing the degeneracy\label{subsec:diagnoseDegeneracy}}

In order to update the product eigenbasis following the $t$th gate in a conventional concordant computation, it is necessary to diagnose the degeneracy of part $\set{G}^t$ of $\rho^{t-1}$, where $\set{G}^t$ is the support of $G^t$, the $t$th gate in the computation, and $\rho^{t-1}$ is the state of the computation at time $t-1$.  This degeneracy can be found by determining whether $\rho^{t-1}$ and $U^{t-1} Q {U^{t-1}}^\dagger$ commute for each permutation $Q$ exchanging two eigenstates of the standard basis for the subsystems in $\set{G}^{t}$.  As the simulation algorithm progresses, permutations equivalent to each gate are found, so $\rho^{t-1} = U^{t-1} P \rho^0 P^\dagger {U^{t-1}}^\dagger$ where $P = \prod_r^{t-1} P^r$ represents the sequence of (known) permutations up to step $t-1$.  Thus, one may equally well check whether
\begin{align}
  \rho^0 = P^\dagger Q P \rho^0 P^\dagger Q P \label{eq:commutativityConstraint1}\;.
\end{align}

I now restrict my attention to concordant computations composed of two-qubit gates acting on a register of $n$ qubits.
The permutation $P^\dagger Q P$ is an involution, i.e., it is self-inverse, and for the case of qubits and two-qubit gates, it is affine when considered as a function on binary vectors.
Lemma~\ref{lem:testingCommutativity} shows that such a permutation commutes with $\rho^0$ if and only if Eq.~\ref{eq:commutativityConstraint1} is satisfied for the pure product state corresponding to each of a particular set of $n+1$ binary vectors.  Consequently, the commutativity of $\rho^0$ and $P^\dagger Q P$, and therefore the degeneracy relevant to updating the product eigenbasis, can be efficiently determined for concordant computations composed of two-qubit gates.

\begin{lem}
A product state on qubits, $\rho = \bigotimes_k \rho_k$, such that $\rho$ is diagonal in the standard basis and $e_k=\brakket{1}{\rho_k}{1}/\brakket{0}{\rho_k}{0}\leq 1$ for all $k$ commutes with an affine involution $S$ if and only if
$\brakket{\vec{i}}{S\rho S^\dagger - \rho}{\vec{i}}=0$ for all $\ket{\vec{i}}$ such that $i_k=\updelta_{k l}$ or $i_k=0$.
\label{lem:testingCommutativity}
\end{lem}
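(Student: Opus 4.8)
The plan is to reformulate the claim as a statement about the diagonal entries of $\rho$ in the standard basis and then exploit that $S$, as an affine involution on $\mathbb{F}_2^n$, is simultaneously a permutation matrix and self-inverse. Write $S\vec{i}=A\vec{i}\oplus\vec{c}$ with $A^2=I$ and $A\vec{c}=\vec{c}$ over $\mathbb{F}_2$, and, after dividing out the positive constant $\prod_k\brakket{0}{\rho_k}{0}$, write the diagonal entry of $\rho$ at the standard-basis vector $\vec{i}$ as $q_{\vec{i}}=\prod_{k:\,i_k=1}e_k\in[0,1]$. Since $S$ is a permutation matrix with $S^\dagger=S^{-1}=S$, the operator $\rho$ commutes with $S$ if and only if $q_{S\vec{i}}=q_{\vec{i}}$ for every $\vec{i}$, and the condition $\brakket{\vec{i}}{S\rho S^\dagger-\rho}{\vec{i}}=0$ is precisely $q_{S\vec{i}}=q_{\vec{i}}$ at that $\vec{i}$. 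The forward implication is then trivial; the work is to deduce $q_{S\vec{i}}=q_{\vec{i}}$ for all $\vec{i}$ from its validity at $\vec{i}=\vec{0}$ and at each single-excitation vector $\vec{e}_l$.

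First I would read off what the $n+1$ hypotheses say. From $\vec{i}=\vec{0}$: $\prod_{k:\,c_k=1}e_k=1$; since $0\le e_k\le1$ this forces $e_k=1$ on the support of $\vec{c}$ — the first of the two essential uses of $e_k\le1$ — and in particular $q$ is invariant under XOR-ing its index by $\vec{c}$, so it suffices to prove $q_{A\vec{i}}=q_{\vec{i}}$ for all $\vec{i}$. From $\vec{i}=\vec{e}_l$, using that invariance, $q_{A\vec{e}_l}=e_l$, i.e. $\prod_{k:\,A_{kl}=1}e_k=e_l$ for every $l$; call this $(\star)$. Setting $Z=\{k:e_k=0\}$, the case $l\notin Z$ of $(\star)$ has nonzero right-hand side, so $A_{kl}=0$ whenever $k\in Z$ and $l\notin Z$; hence $A$ maps $\mathrm{span}\{\vec{e}_k:k\notin Z\}$ bijectively onto itself.

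The heart of the proof is the non-degenerate case $\mathrm{supp}(\vec{i})\cap Z=\emptyset$, where $q_{\vec{i}}\neq0$ and, by the previous paragraph, also $\mathrm{supp}(A\vec{i})\cap Z=\emptyset$. Here I would take logarithms, putting $x_k=-\log e_k\in[0,\infty)$ for $k\notin Z$ — finite and nonnegative exactly because $0<e_k\le1$, the second essential use of the hypothesis. The obstacle is that $\log q_{A\vec{i}}$ is not linear in $\vec{i}$, because $(A\vec{i})_k$ is a sum over $\mathbb{F}_2$ rather than over $\mathbb{Z}$. I get around this by also considering $s_k(\vec{i})=\sum_j A_{kj}i_j$ computed over $\mathbb{Z}$: then $(A\vec{i})_k=s_k(\vec{i})\bmod 2\le s_k(\vec{i})$ with an even, nonnegative difference, so (all sums over $k\notin Z$) $\sum_k x_k(A\vec{i})_k\le\sum_k x_k s_k(\vec{i})=\sum_j i_j\sum_k A_{kj}x_k=\sum_j i_j x_j$, the last step being $(\star)$ in additive form. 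This says $q_{A\vec{i}}\ge q_{\vec{i}}$ for every $\vec{i}$ avoiding $Z$. Applying the same inequality to $A\vec{i}$ in place of $\vec{i}$ and using $A^2=I$ gives $q_{\vec{i}}=q_{A(A\vec{i})}\ge q_{A\vec{i}}$, so the two bounds force $q_{A\vec{i}}=q_{\vec{i}}$. This "bracket between two inequalities via the involution" step is the crux and, I expect, the least obvious part.

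It remains to treat $\mathrm{supp}(\vec{i})\cap Z\neq\emptyset$, where $q_{\vec{i}}=0$: since $A$ preserves $\mathrm{span}\{\vec{e}_k:k\notin Z\}$ bijectively, $\vec{i}\notin\mathrm{span}\{\vec{e}_k:k\notin Z\}$ implies $A\vec{i}\notin\mathrm{span}\{\vec{e}_k:k\notin Z\}$, so $\mathrm{supp}(A\vec{i})$ also meets $Z$ and $q_{A\vec{i}}=0=q_{\vec{i}}$. Together the two cases give $q_{A\vec{i}}=q_{\vec{i}}$, hence $q_{S\vec{i}}=q_{\vec{i}}$, for all $\vec{i}$. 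The main obstacle throughout is the $\mathbb{F}_2$-versus-$\mathbb{Z}$ mismatch in the exponents; bracketing it between two inequalities whose sign is supplied by $e_k\le1$ and whose reversal is supplied by the involution property is the key idea, while the separate bookkeeping for the $e_k=0$ qubits (where the logarithm is unavailable) is a minor complication.
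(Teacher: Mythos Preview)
Your proof is correct and takes a genuinely different route from the paper's. Both arguments begin identically: from $\vec{i}=\vec{0}$ one learns that the affine shift $\vec{c}$ lives inside $\{k:e_k=1\}$, reducing the problem to a linear involution $A$ and the condition $(\star)$ on single excitations; both also set aside the zero set $Z$ and show that $A$ preserves its complement. The divergence is in how $(\star)$ is leveraged for general $\vec{i}$. The paper proceeds structurally: through a chain of small observations (Points~1--5) it shows that for each $k\notin Z$ the image $\set{T}(k)$ contains exactly one index $m$ with $e_m=e_k$, the rest lying in $\set{C}_1$, and that distinct $k$ in a level set $\set{C}_e$ yield distinct such $m$; this makes $f(\set{T}(\set{B}))$ factor cleanly as $\prod_{l\in\set{B}}f(\set{T}(l))$. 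You instead take logarithms $x_k=-\log e_k\geq 0$ and exploit the one-line inequality $(A\vec{i})_k\le \sum_j A_{kj}i_j$ over $\mathbb{Z}$ to get $q_{A\vec{i}}\ge q_{\vec{i}}$, then use $A^2=I$ to obtain the reverse inequality. Your approach is shorter and more analytic, hinging on the single ``$\mathbb{F}_2\le\mathbb{Z}$'' trick plus the involution; the paper's approach is more combinatorial and yields, as a by-product, a finer structural description of how the linear involution acts on the level sets $\set{C}_e$, though that extra structure is not used elsewhere in the paper.
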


\begin{proof}\

Throughout this proof, binary vectors labeling states are represented by the set of indices identifying bits in the $\ket{1}$ state.  Let $\set{S}$ be a version of $S$ that acts on such sets\footnote{For brevity I omit brackets in the argument of this and other functions when the input is a singleton, e.g., I write $\set{S}(k)$ rather than $\set{S}(\{k\})$.}.  In this representation, the affine linearity of $\set{S}$ is expressed as $\set{S}(\set{A}\ominus\set{B}) = \set{S}(\set{A})\ominus\set{S}(\set{B})\ominus\set{K}$ for some fixed $\set{K}$, while the fact that $\set{S}$ is an involution implies that $\set{S}(\set{S}(\set{A}))=\set{A}$.  Define $\set{C}_e = \{k: e_k = e\}$ and $f(\set{B})=\prod_{k\in \set{B}} e_k$.  In terms of $f$ and $\set{S}$ the commutativity condition to be satisfied is
\begin{align}
  f(\set{B}) = f(\set{S}(\set{B})) \label{eq:commutativityConstraint2}
\end{align}
for any set of bits, $\set{B}$.

The forward implication stated in this lemma is trivial.  If Eq.~\ref{eq:commutativityConstraint2} is satisfied for any set $\set{B}$ then it is obviously satisfied for any singleton $\{k\}$ and for the empty set.

To demonstrate the reverse, I assume, for the remainder of the proof, that Eq.~\ref{eq:commutativityConstraint2} is satisfied for the empty set and any singleton and seek to show that it is satisfied in general.  I organize what follows in terms of a sequence of small points.\\

\noindent\textbf{Point 0:} $\set{T}(\set{B})=\set{S}(\set{B})\ominus\set{K}$ is a linear involution, and $\set{T}$ satisfies Eq.~\ref{eq:commutativityConstraint3} if and only if $\set{S}$ satisfies Eq.~\ref{eq:commutativityConstraint2}.\\
$\set{T}$ is linear since $\set{S}$ is affine with constant $\set{K}$.  Because $\set{S}(\emptyset)=\set{K}$ and $\set{S}$ is an involution, $\set{S}(\set{K})=\set{S}(\set{S}(\emptyset))=\emptyset$, implying that $\set{T}$ is an involution since
\begin{align*}
  \begin{split}
    \set{T}(\set{T}(\set{B})) &= \set{T}(\set{S}(\set{B})\ominus\set{K}) = \set{S}(\set{S}(\set{B})\ominus\set{K}) \ominus \set{K} \\
    &= \set{S}(\set{S}(\set{B}))\ominus\set{S}(\set{K})\ominus \set{K} \ominus \set{K} = \set{B} \;.
  \end{split}
\end{align*}
Furthermore, $\set{K}\subseteq\set{C}_1$ since if $\exists k\in\set{K}$ such that $k\not\in\set{C}_1$ then $f(\set{K})\leq e_k < 1 =f(\emptyset)$.  Consequently, $f(\set{B}) = f(\set{B}\ominus\set{K})$, and thus Eq.~\ref{eq:commutativityConstraint2} is satisfied if and only if
\begin{align}
  f(\set{B}) = f(\set{T}(\set{B})) \label{eq:commutativityConstraint3}
\end{align}

\noindent\textbf{Point 1:} $\forall k \exists m\in\set{T}(k)$ such that $k\in\set{T}(m)$\\
Because $\set{T}$ is a linear involution,
\begin{align*}
  k = \set{T}(\set{T}(k)) = \set{T}\left(\bigominus_{l\in\set{T}(k)} \{l\}\right) = \bigominus_{l\in\set{T}(k)} \set{T}(l)\;,
\end{align*}
so $\forall k \exists m\in\set{T}(k)$ such that $k\in\set{T}(m)$.
\\

\noindent\textbf{Point 2:} $e_l\geq e_k \ \forall l\in\set{T}(k)$\\
If $\exists l\in\set{T}(k)$ such that $e_l<e_k$ then $f(k)=e_k>e_l\geq f(\set{T}(k))$, so  $e_l\geq e_k \ \forall l\in\set{T}(k)$.\\

\noindent\textbf{Point 3:} $\exists m\in\set{T}(k)$ such that $e_m = e_k$\\
By the previous two points $\exists m\in\set{T}(k)$ such that $k\in\set{T}(m)$ and $e_m\geq e_k$, but this implies that $e_m = e_k$ since, by Point 2, $k\in\set{T}(m)$ implies $e_k \geq e_m$.\\

\noindent\textbf{Point 4:} Each $k\in\set{C}_e$ where $e>0$ is mapped by $\set{T}$ to a single $m\in\set{C}_e$ together with (possibly) some elements of $\set{C}_1$.\\
By the previous point, $\exists m\in\set{T}(k)$ such that $m\in\set{C}_{e_k}$, implying that
\begin{align*}
  f(\set{T}(k)) = f(m) f(\set{T}(k)\setminus\{m\}) = e_k \prod_{l\in\set{T}(k)\setminus\{m\}} e_l\;,
\end{align*}
which is equal to $f(k)=e_k$ only when $e_k=0$ or $e_l=1 \ \forall l \in \set{T}(k)\setminus\{m\}$. \\

\noindent\textbf{Point 5:} Any two distinct elements $k,l\in\set{C}_e$ where $e>0$ are mapped by $\set{T}$ to distinct elements of $\set{C}_e$ together with (possibly) some elements of $\set{C}_1$.\\
If $\exists k,l\in\set{C}_e$ with $k\ne l$ and $1>e>0$ such that $\set{T}(k)/\set{C}_1=\set{T}(l)/\set{C}_1= \{m\}$ then $k,l\in\set{T}(m)$ since $k,l\not\in\set{T}(o)$ for any $o\in\set{C}_1$, which contradicts the preceding point.\\

\noindent\textbf{Point 6:} If $\set{B}\cap\set{C}_0\neq \emptyset$ then $\set{T}(\set{B})\cap\set{C}_0\neq\emptyset$.\\
If $\exists \set{B}$ such that $\set{B}\cap\set{C}_0\neq\emptyset$ but $\set{T}(\set{B})\cap\set{C}_0=\emptyset$ then $\exists l\in\set{T}(\set{B})$ such that $e_l>0$ and $\set{T}(l)\cap\set{C}_0\neq\emptyset$, which contradicts my second point.\\

\noindent\textbf{Point 7:} Eq.~\ref{eq:commutativityConstraint3} is satisfied for any set $\set{B}$.\\
If $\set{B}\cap\set{C}_0\neq \emptyset$ then $\set{T}(\set{B})\cap\set{C}_0\neq\emptyset$ so $f(\set{B})=f(\set{T}(\set{B})) = 0$.  Otherwise,
\begin{align*}
  \begin{split}
    f(\set{B}) = \prod_{l\in\set{B}} f(l) &= \prod_{l\in\set{B}} f(\set{T}(l)) = f\left(\bigominus_{l\in\set{B}} \set{T}(l)\right)\\
    &= f\left(\set{T}\left(\bigominus_{l\in\set{B}} \{l\}\right)\right) = f(\set{T}(\set{B}))\;,
  \end{split}
\end{align*}
where the middle equality follows from Point 5, which shows that $\set{T}(l)\cap\set{T}(k)\subseteq\set{C}_1$ for all $k$ and $l$ such that $k\ne l$ and $k,l\not\in\set{C}_0,\set{C}_1$.
\end{proof}

\section{Extensions \label{sec:extensions}}

Quantum computations, even those described in terms of quantum circuits, frequently are not envisioned in the conventional form outlined by Def.~\ref{def:conventionalQuantumComputation}.  The most common deviations are the inclusion of single-subsystem measurements intermixed with the unitary operators and the introduction of new subsystems during the course of the computation.  Another possibility for concordant computations is that the input state be a mixture of product states that is not also a product of mixed states but that can be efficiently prepared due to the mixture having few terms.  Computations with these features can be converted to conventional ones (allowing for some post selection to assist in the generation of the desired input state), but, in general, the conversion process preserves neither the concordance of the computation nor the maximal support of its unitary operators.  While subsystems introduced during the course of a computation can equally well be introduced at its beginning, non-terminal measurements and non-product-state inputs require special treatment.

\subsection{Non-terminal measurements}

It requires some effort to extend the simulation algorithm described in the previous section to non-terminal measurements on single subsystems.  Through the first measurement, the simulation may proceed exactly as previously explained, but subsequent to that, a more complex technique for diagnosing the degeneracy is necessary since measurements introduce the possibility that the degeneracy relevant to determining the permutation and change of eigenbasis equivalent to a gate might be dependent on the outcome of the measurement result.  There seems to be a method of efficiently diagnosing the relevant degeneracy when measurements are performed in the eigenbasis, but the more general problem is one that I have not yet been able to solve.

\subsection{Non-product-state inputs}

Generically, Def.~\ref{def:conventionalQuantumComputation} excludes a very natural kind of mixed input state, namely, the probabilistic mixture of a few pure product states.  As it happens, however, concordant computations with such input states are easy to simulate; the state of the computer can simply be stored and updated explicitly.  The algorithm is the same as that described in Sec.~\ref{sec:simulatingConventionalConcordantComputations} except that the degeneracy is straightforward to evaluate since the state is explicitly known.  Because unitary operators do not change the rank of density matrix and projective measurements can only decrease it, explicit storage of the state remains practical throughout the simulation.

Effectively, a quantum computation on a low-rank input state becomes complicated only because the eigenbasis becomes complicated.  For a concordant computation the eigenbasis remains manageable.

\section{Conclusion}

In summary, I have shown that conventional concordant computations composed exclusively of gates acting on one or two qubits can be efficiently simulated using a classical computer.  As a consequence, such a computation must generate quantum discord if it is to permit the efficient solution of a problem requiring super-polynomial resources classically.  A similar statement holds for more general gate sets whenever the input state is either a generic product state or a mixture of a few pure product states.  These results lend support to the idea that quantum discord is the appropriate generalization of entanglement with regard to mixed-state quantum computation.  That being said, concordance is such a stringent property that it no doubt corresponds to the case of zero quantum correlations for a variety of measures (including the many flavors of discord), so this is far from the final word on the subject.  As has periodically been noted, it is also important to keep in mind that there can be no single resource for quantum computing: If quantum computations without property $\prop{P}$ can be efficiently simulated classically then $\prop{P}$ is a necessary resource for achieving a Promethean speedup.

Several possible directions for future research are suggested by previous work on simulating quantum computations with restricted entanglement.  The two most prominent are investigating the performance of the simulation for approximately concordant states and extending it to computations where discord is restricted to blocks of qubits of bounded size.  A block of qubits with unrestricted correlations can be treated as a single quantum system, so progress on the latter topic would likely require extending the simulation method to qudits.

Though I specialize to qubits and two-qubit gates only in Sec.~\ref{subsec:diagnoseDegeneracy}, it is doubtful whether my simulation method can be extended to more general gate sets.
Section~\ref{subsec:diagnoseDegeneracy} depends crucially on the fact that permutations on one or two bits of a vector are necessarily linear (or, from an alternate perspective, that such permutations are Clifford gates) since this allows me to determine whether Eq.~\ref{eq:commutativityConstraint1} is satisfied by checking a small set of basis vectors.  On the other hand, permutations on systems of dimension greater than two or on more than two bits need not be linear.  Thus, directly generalizing the method of simulation described in this paper requires a means of testing Eq.~\ref{eq:commutativityConstraint1} for an arbitrary sequence of permutations and input (mixed) product state.  This implies the ability to efficiently solve 3-SAT, an NP-Complete problem, since $P$ in Eq.~\ref{eq:commutativityConstraint1} can be chosen to implement a boolean formula, $Q$ to copy the result to an ancillary qubit, and $\rho^0$ to consist of unbiased input qubits and maximally biased ancillary qubits, yielding $\rho^0\neq P^\dagger Q P \rho^0 P^\dagger Q P$ if and only if the boolean formula is satisfied for some input.  In other words, a direct extension of my simulation method is effectively ruled out, though I am unable to exclude the possibility that some more generally applicable method exists for simulating concordant computations.

\acknowledgments

I am grateful to Emanuel Knill, Anil Shaji, Carlton Caves, Vaibhav Madhok, and Adam Meier for many productive discussions.  This paper is a contribution by the National Institute of Standards and Technology and, as such, is not subject to U.S. copyright.

\bibliography{../../citations}

\end{document}